\documentclass[oneside,english]{amsart}
\usepackage[T1]{fontenc}
\usepackage[latin9]{inputenc}
\usepackage{geometry}
\geometry{verbose,tmargin=2cm,bmargin=2cm,lmargin=2cm,rmargin=2cm}
\usepackage{amstext}
\usepackage{amsthm}
\usepackage{amssymb}
\usepackage{graphicx}
\usepackage{setspace}
\onehalfspacing

\makeatletter
\numberwithin{equation}{section}
\numberwithin{figure}{section}
\theoremstyle{plain}
\newtheorem{thm}{\protect\theoremname}
\theoremstyle{plain}
\newtheorem{prop}[thm]{\protect\propositionname}

\usepackage{hyperref}
\newcommand{\dd}{\mathrm{d}}

\makeatother

\usepackage{babel}
\providecommand{\propositionname}{Proposition}
\providecommand{\theoremname}{Theorem}

\begin{document}
\title{Generalised model of wear in contact problems: the case of oscillatory
load}
\author{Dmitry Ponomarev$^{1,2,3}$}
\begin{abstract}
In this short paper, we consider a sliding punch problem under recently
proposed model of wear which is based on the Riemann-Liouville fractional
integral relation between pressure and worn volume, and incorporates
another additional effect pertinent to relaxation. A particular case
of oscillatory (time-harmonic) load is studied. The time-dependent
stationary state is identified in terms of eigenfunctions of an auxiliary
integral operator. Convergence to this stationary state is quantified.
Moreover, numerical simulations have been conducted in order to illustrate
the obtained results and study qualitative dependence on two main
model parameters.
\end{abstract}

\maketitle

\section{Introduction\label{sec:intro}}

\footnotetext[1]{FACTAS team, Centre Inria d'Universit{\'e} C{\^o}te d'Azur, France}\footnotetext[2]{St. Petersburg Department of Steklov Mathematical Institute of Russian Academy of Sciences, Russia}\footnotetext[3]{Contact: dmitry.ponomarev@inria.fr}Due
to its practical importance, contact problems with wear has been an
area of active research for decades. A problem when indented wearable
punch slides with a constant speed on an elastic layer or a half-space
is a classical setting (see e.g. \cite{AlblKuip1,AlblKuip2}). Numerous
empirical laws of wear were proposed to fit predictions of different
models \cite{ArgChai1,ArgChai2,ArgFad,Fepp,Goryach,Kov,Zhu} to experiments
such as a pin on a rotating disk. Recently, a Riemann-Liouville relation,
generalising the classical simple integral relation between worn volume
and pressure, was motivated in \cite{Argat}. Mathematical analysis
of that model and its further extension was given in \cite{Pon}.
Namely, in addition to the fractional order integration \cite{GorMain},
another generalisation of the model has been incorporated aiming to
account for possible relaxation effects \cite{YevtPyr}. The long-time
behavior of the pressure profile has been investigated in the set-ups
where the exterior load was either constant or eventually constant
(i.e. the so-called transitional load describing a smooth switch between
two values over some finite time interval).

In the present work, we are concerned with analysis of the long-time
behavior of the solution of the generalised model in the situation
where the exterior load is time-harmonic. The applied analysis also
automatically applies to the classical model (with worn volume and
pressure related through the basic Archard's law \cite{Arch,Gal})
as a particular case.

The structure of the paper is as follows. In Section \ref{sec:model},
we briefly recall the previously proposed model as well as some auxiliary
functions and their properties that are going to be essential for
the present work. Section \ref{sec:analysis} is dedicated to the
analysis of the solution of the model: we will identify time-dependent
stationary state for the pressure profile and estimate the convergence
rate depending on the value of the model parameter $\alpha$. We illustrate
the obtained results numerically in Section \ref{sec:numerics} and
conclude with their discussion in Section \ref{sec:conclus}.

\section{Model\label{sec:model}}

According to \cite{AleksKov1,ArgChai1,ArgChai2,ArgFad,Gal,Kom,Vor},
the pressure under the punch satisfies the following equation for
displacements 
\begin{equation}
\eta p\left(x,t\right)+\int_{-a}^{a}K\left(x-\xi\right)p\left(\xi,t\right)\dd\xi=\delta\left(t\right)-w\left[p\right]\left(x,t\right)-\Delta\left(x\right),\hspace{1em}x\in\left(-a,a\right),\hspace{1em}t\geq0,\label{eq:displ_bal}
\end{equation}
and the force equilibrium condition
\begin{equation}
\int_{-a}^{a}p\left(x,t\right)\dd x=P\left(t\right),\hspace{1em}t\geq0.\label{eq:p_equil}
\end{equation}

The contact load is denoted $P\left(t\right)$ and, in the present
work, we study its particular form, namely, 
\begin{equation}
P\left(t\right)=P_{0}-P_{\Delta}+P_{\Delta}\cos\left(\omega t\right)\label{eq:P_t}
\end{equation}
for some given constants $P_{0}$, $P_{\Delta}$, $\omega>0$.

The interval $\left(-a,a\right)$ corresponds to the contact area
under the punch, $K\left(x\right)$ is a kernel function of the ``pressure-to-displacement''
operator. Such an operator stems from the Green's function pertinent
to a given geometry. In particular, we are interested in an elastic
half-space problem which is the limiting case of the thick-layer problem.
In this case, we have 
\begin{equation}
K\left(x\right):=-\log\left|x\right|+C_{K},\label{eq:K_def}
\end{equation}
for some constant $C_{K}>\log a$.

The illustration of the problem geometry is given in Figure \ref{fig:geom}.

\begin{figure}
\includegraphics[scale=0.5]{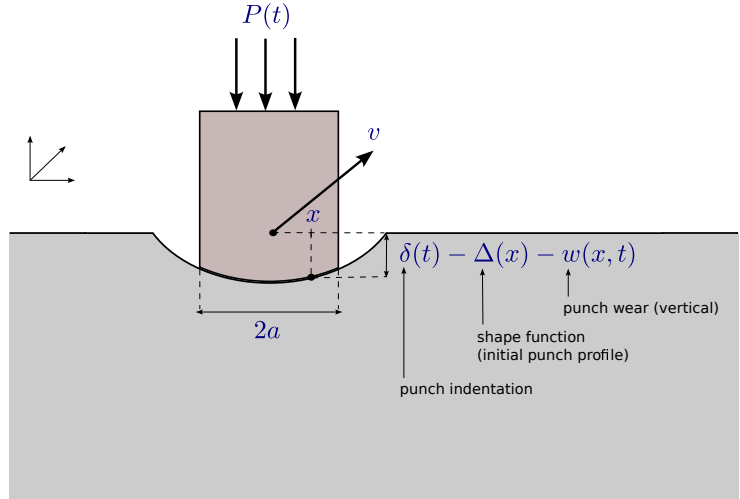}

\caption{\label{fig:geom} Illustration of the geometry of the problem}
\end{figure}

The first term on the left-hand side of (\ref{eq:displ_bal}) accounts
for the additional deformation due to the presence of a coating or
to model surface roughness \cite{AleksKov2,Kov}. The strength of
this effect is measured by the constant $\eta>0$.\\
The initial punch profile $\Delta\left(x\right)$ is a known function
whereas the punch indentation $\delta\left(t\right)$ is a function
of only time. Its initial value $\delta\left(0\right)$ can be found
from solving
\begin{equation}
\eta p\left(x,0\right)+\int_{-a}^{a}K\left(x-\xi\right)p\left(\xi,0\right)\dd\xi=\delta\left(0\right)-\Delta\left(x\right),\hspace{1em}x\in\left(-a,a\right),\label{eq:p0_int_eq}
\end{equation}
and requiring that $\int_{-a}^{a}p\left(x,0\right)\dd x=P_{0}$, as
follows from (\ref{eq:displ_bal}) and (\ref{eq:p_equil}), respectively,
evaluated at $t=0$.\\
Finally, $w\left[p\right]\left(x,t\right)$ is the wear term which
is an operator acting on the contact pressure $p\left(x,t\right)$.
Following the discussion in \cite[Sec. 3]{Pon}, we take it as 
\begin{equation}
w\left[p\right]\left(x,t\right)=-\nu\mu^{1/\alpha-1}\int_{0}^{t}\mathcal{E}_{\alpha}\left(\mu^{1/\alpha}\left(t-\tau\right)\right)p\left(x,\tau\right)\dd\tau,\label{eq:w_p_term}
\end{equation}
where $\mu>0$ is a constant and the special function $\mathcal{E}_{\alpha}$
can be defined as
\begin{equation}
\mathcal{E_{\alpha}}\left(x\right):=\frac{\alpha}{x}\sum_{k=1}^{\infty}\frac{\left(-1\right)^{k}kx^{\alpha k}}{\Gamma\left(\alpha k+1\right)},\hspace{1em}x>0,\hspace{1em}\alpha>0,\label{eq:cal_E_a_def}
\end{equation}
with $\Gamma$ being the Gamma function. Note that, in particular
case, when $\alpha=1,$we have $\mathcal{E}_{1}\left(x\right)=-\exp\left(-x\right)$.\\
We will make use of the following asymptotics
\begin{equation}
\mathcal{E_{\alpha}}\left(x\right)=-\frac{\alpha}{\Gamma\left(1+\alpha\right)}\frac{1}{x^{1-\alpha}}+\mathcal{O}\left(\frac{1}{x^{1-2\alpha}}\right),\hspace{1em}\hspace{1em}\left|x\right|\ll1,\label{eq:cal_E_a_small}
\end{equation}
\begin{equation}
\mathcal{E_{\alpha}}\left(x\right)=\begin{cases}
-\frac{\alpha}{\Gamma\left(1-\alpha\right)}\frac{1}{x^{\alpha+1}}+\mathcal{O}\left(\frac{1}{x^{2\alpha+1}}\right), & \alpha\in\left(0,1\right)\cup\left(1,2\right),\\
-\exp\left(-x\right), & \alpha=1,
\end{cases}\hspace{1em}\hspace{1em}x\gg1,\label{eq:cal_E_a_large}
\end{equation}
as well as the integral relation
\begin{equation}
\int_{0}^{x_{0}}\mathcal{E}_{\alpha}\left(\lambda^{1/\alpha}x\right)\dd x=\lambda^{-1/\alpha}\left[E_{\alpha}\left(-\lambda x_{0}^{\alpha}\right)-1\right],\hspace{1em}x_{0}>0,\hspace{1em}\alpha>0,\hspace{1em}\lambda>0,\label{eq:cal_E_a_int}
\end{equation}
where $E_{\alpha}$ is the Mittag-Leffler function defined as
\begin{equation}
E_{\alpha}\left(z\right):=\sum_{k=0}^{\infty}\frac{z^{k}}{\Gamma\left(\alpha k+1\right)},\hspace{1em}z\in\mathbb{C},\hspace{1em}\alpha>0.\label{eq:E_a_def}
\end{equation}
In particular, we have $E_{1}\left(z\right)=\exp z$, $z\in\mathbb{C}$,
and $E_{\alpha}\left(0\right)=1$, $\alpha>0$. Moreover, the following
useful asymptotic holds true
\begin{equation}
E_{\alpha}\left(-x\right)=\begin{cases}
\frac{1}{\Gamma\left(1-\alpha\right)}\frac{1}{x}+\mathcal{O}\left(\frac{1}{x^{2}}\right), & \alpha\in\left(0,1\right)\cup\left(1,2\right),\\
\exp\left(-x\right), & \alpha=1,
\end{cases}\hspace{1em}\hspace{1em}x\gg1.\label{eq:E_a_large}
\end{equation}
References for the above mentioned results can be found in \cite[Appendix A]{Pon}.

Finally, we recall from \cite[Sec. 3]{Pon} that when $\alpha=1$
and $\mu=0$ (or more precisely, in the limit of $\mu\searrow0$),
the relation (\ref{eq:w_p_term}) reduces to what is consistent with
the classical Archard's law \cite{Arch}: 
\begin{equation}
w\left[p\right]\left(x,t\right)=-\nu\int_{0}^{t}p\left(x,\tau\right)\dd\tau.\label{eq:Arch_law}
\end{equation}

\section{Analysis\label{sec:analysis}}

The model (\ref{eq:displ_bal})--(\ref{eq:p_equil}) has been rigorously
analysed in \cite[Sec. 4]{Pon}. We adapt here a general theory \cite[Thm 6]{Pon}
using the result of \cite[Prop. 14]{Pon} valid for the particular
form of the kernel function (\ref{eq:K_def}). Namely, we have the
following theorem.
\begin{thm}
\label{thm:main} Assume that $\mu\geq0$, $\eta$, $\nu>0$, $\alpha\in\left(0,2\right)$,
$a\neq2$, and $p\left(\cdot,0\right)\in L^{2}\left(-a,a\right)$
solves (\ref{eq:p0_int_eq}) with $K$ given by (\ref{eq:K_def})
and $\delta\left(0\right)$ such that $\int_{-a}^{a}p\left(x,0\right)\dd x=P_{0}$
for some $P_{0}>0$. Suppose $P\left(t\right)$ is as in (\ref{eq:P_t})
and $w\left[p\right]$ is defined in (\ref{eq:w_p_term}). Then, the
unique solution $p\in C_{b}\left(\mathbb{R}_{+};L^{2}\left(-a,a\right)\right)$
of (\ref{eq:displ_bal}) satisfying (\ref{eq:p_equil}) is given by
\begin{equation}
p\left(x,t\right)=\frac{P\left(t\right)}{2a}+\sum_{k=1}^{\infty}d_{k}\left(t\right)\phi_{k}\left(x\right),\hspace{1em}x\in\left(-a,a\right),\hspace{1em}t\geq0,\label{eq:p_sol}
\end{equation}
where
\begin{align}
d_{k}\left(t\right):= & d_{k}^{0}\left[1+\frac{\nu}{\mu\left(\eta+\sigma_{k}\right)+\nu}\left(E_{\alpha}\left(-\left(\mu+\frac{\nu}{\eta+\sigma_{k}}\right)t^{\alpha}\right)-1\right)\right]-\frac{l_{k}}{2a\left(\eta+\sigma_{k}\right)}\left(P\left(t\right)-P\left(0\right)\right)\label{eq:d_k_def}\\
 & -\frac{\nu l_{k}}{2a\left(\eta+\sigma_{k}\right)^{2}}\left(\mu+\frac{\nu}{\eta+\sigma_{k}}\right)^{1/\alpha-1}\int_{0}^{t}\mathcal{E}_{\alpha}\left(\left(\mu+\frac{\nu}{\eta+\sigma_{k}}\right)^{1/\alpha}\left(t-\tau\right)\right)\left[P\left(\tau\right)-P\left(0\right)\right]\dd\tau,\hspace{1em}k\geq1,\nonumber 
\end{align}
\begin{equation}
d_{k}^{0}:=\int_{-a}^{a}p\left(\xi,0\right)\phi_{k}\left(\xi\right)\dd\xi,\hspace{1em}\hspace{1em}l_{k}:=\frac{1}{2a}\int_{-a}^{a}\int_{-a}^{a}K\left(\zeta-\xi\right)\phi_{k}\left(\xi\right)\dd\zeta\dd\xi,\hspace{1em}k\geq1.\label{eq:d_k_0_l_k_def}
\end{equation}
Here, $\mathcal{E}_{\alpha}$ and $E_{\alpha}$ are as in (\ref{eq:cal_E_a_def})
and (\ref{eq:E_a_def}), respectively, whereas $\phi_{k}\in L_{0}^{2}\left(-a,a\right)$,
$\sigma_{k}>0$, $k\geq1$, are normalised eigenfunctions and eigenvalues
of the compact self-adjoint operator 
\begin{equation}
\mathcal{K}_{2}\left[\phi\right]\left(x\right):=\int_{-a}^{a}\left[K\left(x-\xi\right)-\frac{1}{2a}\int_{-a}^{a}\left(K\left(\zeta-x\right)+K\left(\zeta-\xi\right)\right)\dd\zeta\right]\phi\left(\xi\right)\dd\xi\label{eq:K2_def}
\end{equation}
defined on the functional space
\begin{equation}
L_{0}^{2}\left(-a,a\right):=\left\{ f\in L^{2}\left(-a,a\right):\;\int_{-a}^{a}f\left(x\right)\dd x=0\right\} .\label{eq:L0_def}
\end{equation}
\end{thm}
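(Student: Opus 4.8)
The plan is to diagonalise the spatial part of \eqref{eq:displ_bal} in the orthonormal basis of $L^{2}\left(-a,a\right)$ consisting of the normalised constant $\left(2a\right)^{-1/2}$ together with the eigenfunctions $\phi_{k}$ of $\mathcal{K}_{2}$. First I would write $p\left(\cdot,t\right)=c\left(t\right)\left(2a\right)^{-1/2}+\sum_{k}d_{k}\left(t\right)\phi_{k}$ and note that, since every $\phi_{k}\in L_{0}^{2}\left(-a,a\right)$ has zero mean, the equilibrium constraint \eqref{eq:p_equil} forces the constant coefficient to be exactly $P\left(t\right)/\left(2a\right)$. Thus the only genuine unknowns are the scalar coefficients $d_{k}\left(t\right)=\langle p\left(\cdot,t\right),\phi_{k}\rangle$, and the asserted representation \eqref{eq:p_sol} amounts to showing that this expansion solves the model.

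Next I would project \eqref{eq:displ_bal} onto each $\phi_{k}$. Because $\delta\left(t\right)$ and the constant mode have zero mean, they are annihilated by the projection; the operator with kernel $K$ contributes $\sigma_{k}d_{k}\left(t\right)$ on the $\phi_{k}$-component, since $\mathcal{K}_{2}$ is precisely the compression of that operator to $L_{0}^{2}\left(-a,a\right)$ and $\mathcal{K}_{2}\phi_{k}=\sigma_{k}\phi_{k}$, plus a cross term arising from the constant mode that is proportional to $l_{k}P\left(t\right)$ (here $l_{k}$ is the mean of $\mathcal{K}\phi_{k}$). Crucially, the memory operator $w\left[\cdot\right]$ commutes with the spatial projection and acts as a temporal convolution of $d_{k}$ against $\mu^{1/\alpha-1}\mathcal{E}_{\alpha}\left(\mu^{1/\alpha}\cdot\right)$. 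This decouples the problem into a countable family of scalar Volterra integral equations of the second kind for the $d_{k}$. Projecting the initial relation \eqref{eq:p0_int_eq} onto $\phi_{k}$ then lets me eliminate $\langle\Delta,\phi_{k}\rangle$ in favour of $d_{k}^{0}$ and $P\left(0\right)$, so that the forcing enters only through $P\left(t\right)-P\left(0\right)$.

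The core computation is the explicit solution of each scalar Volterra equation, and here I would use the Laplace transform. Starting from the identity $\mathcal{E}_{\alpha}\left(x\right)=\frac{\dd}{\dd x}E_{\alpha}\left(-x^{\alpha}\right)$ and the standard transform $\mathcal{L}\left[E_{\alpha}\left(-\lambda t^{\alpha}\right)\right]\left(s\right)=s^{\alpha-1}/\left(s^{\alpha}+\lambda\right)$, one gets $\mathcal{L}\left[\mu^{1/\alpha-1}\mathcal{E}_{\alpha}\left(\mu^{1/\alpha}\cdot\right)\right]\left(s\right)=-1/\left(s^{\alpha}+\mu\right)$. The transformed Volterra equation is then algebraic, and the sole effect of the wear term is to shift $\mu\mapsto\mu+\nu/\left(\eta+\sigma_{k}\right)$ in the denominator, i.e. to replace the relaxation rate by the modal rate $\mu+\nu/\left(\eta+\sigma_{k}\right)$. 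Inverting with the same family of transforms and the convolution theorem reproduces the convolution term in \eqref{eq:d_k_def}, while the integral identity \eqref{eq:cal_E_a_int} evaluates the $d_{k}^{0}$-contribution as the Mittag-Leffler bracket; evaluation at $t=0$ confirms $d_{k}\left(0\right)=d_{k}^{0}$, consistent with \eqref{eq:d_k_0_l_k_def}.

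Finally I would address convergence and membership in $C_{b}\left(\mathbb{R}_{+};L^{2}\left(-a,a\right)\right)$, which I expect to be the main obstacle. By Parseval, $\|p\left(\cdot,t\right)-P\left(t\right)/\left(2a\right)\|^{2}=\sum_{k}\left|d_{k}\left(t\right)\right|^{2}$, so one must bound the $d_{k}$ uniformly in $t$ and summably in $k$; this relies on the boundedness and decay of $E_{\alpha}\left(-\left(\mu+\nu/\left(\eta+\sigma_{k}\right)\right)t^{\alpha}\right)$ via \eqref{eq:E_a_large}, together with a uniform-in-$t$ estimate of the oscillatory convolution $\int_{0}^{t}\mathcal{E}_{\alpha}\left(\cdots\right)\left[P\left(\tau\right)-P\left(0\right)\right]\dd\tau$ valid across $\alpha\in\left(0,2\right)$. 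The delicate points are justifying the term-by-term projection (interchanging the infinite sum with the kernel and memory operators, which requires uniform $L^{2}$-convergence) and the uniqueness of the scalar Volterra solutions over the whole range $\alpha\in\left(0,2\right)$; both of these I would secure by invoking the general existence-uniqueness framework of \cite{Pon} and the spectral properties of $\mathcal{K}_{2}$ established there for the logarithmic kernel \eqref{eq:K_def}.
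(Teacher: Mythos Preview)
Your approach is correct but differs markedly from the paper's in style and scope. The paper's proof is a one-paragraph citation: it simply observes that the kernel \eqref{eq:K_def} and the load \eqref{eq:P_t} satisfy the hypotheses of \cite[Thm~6]{Pon}, invokes \cite[Prop.~14]{Pon} to conclude that $\text{Ker }\mathcal{K}_{2}=\left\{0\right\}$ (this is precisely where the assumption $a\neq2$ enters), appeals to \cite[Prop.~4]{Pon} for the positivity $\sigma_{k}>0$, and then reads off the simplified solution formula. You instead reconstruct the derivation that presumably underlies \cite[Thm~6]{Pon}: spectral decomposition of $L^{2}\left(-a,a\right)$ into the constant mode and the eigenbasis of $L_{0}^{2}\left(-a,a\right)$, projection of \eqref{eq:displ_bal} to decoupled scalar Volterra equations, and their explicit Laplace-domain solution. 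Your route is essentially self-contained and explains \emph{why} the formula has its structure (in particular, the emergence of the modal rate $\mu+\nu/\left(\eta+\sigma_{k}\right)$ from the shift in the Laplace symbol), whereas the paper's route is economical but opaque. One point you leave implicit is the role of the hypothesis $a\neq2$: it is exactly what guarantees that $\text{Ker }\mathcal{K}_{2}$ is trivial and hence that the $\phi_{k}$ span all of $L_{0}^{2}\left(-a,a\right)$; without this, the basis expansion in your first step would be incomplete and the representation \eqref{eq:p_sol} could miss a component of $p\left(\cdot,t\right)$.
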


\begin{proof}
The statement of the theorem is merely a rephrasement of several results
from \cite{Pon}. First of all, it is straightforward (see also \cite[Prop. 11]{Pon})
to verify that the kernel function (\ref{eq:K_def}) and the exterior
load (\ref{eq:P_t}) satisfy all assumptions of \cite[Thm 6]{Pon}.
Then, thanks to \cite[Prop. 14]{Pon}, we observe that a form of the
solution given by \cite[Thm 6]{Pon} simplifies since $\text{Ker }\mathcal{K}_{2}$,
the kernel space of the auxiliary operator $\mathcal{K}_{2}$, is
empty. This last part calls for the additional assumption $a\neq2$
appearing in the formulation. Finally, the fact that $\sigma_{k}>0$
for $k\geq1$ follows from \cite[Prop. 4]{Pon} and another use of
\cite[Prop. 14]{Pon}.
\end{proof}
We now proceed with the main goal of the paper. We identify the stationary
state and perform long-time behaviour analysis to study qualitative
character and speed of the convergence of the solution to this stationary
state.
\begin{prop}
\label{prop:stat_stat_conv}Under assumptions of Theorem (\ref{thm:main}),
there exists $\delta_{p}\in C\left(\mathbb{R}_{+};L^{2}\left(-a,a\right)\right)$,
$\left\Vert \delta_{p}\left(\cdot,t\right)\right\Vert _{L^{2}\left(-a,a\right)}\longrightarrow0$
as $t\rightarrow+\infty$, such that the solution $p\left(x,t\right)$
to the model (\ref{eq:displ_bal})--(\ref{eq:p_equil}) can be written
as
\begin{equation}
p\left(x,t\right)=\widetilde{p}_{\infty}\left(x\right)-W_{0}\left(x\right)\cos\left(\omega t-\psi\left(x\right)\right)+\delta_{p}\left(x,t\right)=:p_{\infty}\left(x,t\right)+\delta_{p}\left(x,t\right),\label{eq:p_long_t}
\end{equation}
where
\begin{equation}
\widetilde{p}_{\infty}\left(x\right):=\frac{1}{2a}\left(P_{0}-P_{\Delta}\right)+\sum_{k=1}^{\infty}\left(d_{k}^{0}\frac{\mu\left(\eta+\sigma_{k}\right)}{\mu\left(\eta+\sigma_{k}\right)+\nu}+\frac{P_{\Delta}\mu l_{k}}{2a\left[\mu\left(\eta+\sigma_{k}\right)+\nu\right]}\right)\phi_{k}\left(x\right),\label{eq:p_inf_def}
\end{equation}
\begin{equation}
W_{0}\left(x\right):=\left[W_{1}\left(x\right)+W_{2}\left(x\right)\right]^{1/2},\hspace{1em}\hspace{1em}\psi\left(x\right):=\textnormal{sign}\left(W_{2}\left(x\right)\right)\arccos\frac{W_{1}\left(x\right)}{W_{0}\left(x\right)},\label{eq:W0_psi_def}
\end{equation}
\begin{equation}
W_{1}\left(x\right):=-1+\frac{P_{\Delta}}{2a}\sum_{k=1}^{\infty}\frac{l_{k}}{\eta+\sigma_{k}}\left[\left(\mu+\frac{\nu}{\eta+\sigma_{k}}\right)^{1/\alpha-1}\frac{\nu}{\eta+\sigma_{k}}C_{k}^{c,\omega}+1\right]\phi_{k}\left(x\right),\label{eq:W1_def}
\end{equation}
\begin{equation}
W_{2}\left(x\right):=\frac{P_{\Delta}}{2a}\sum_{k=1}^{\infty}\frac{\nu l_{k}}{\left(\eta+\sigma_{k}\right)^{2}}\left(\mu+\frac{\nu}{\eta+\sigma_{k}}\right)^{1/\alpha-1}C_{k}^{s,\omega}\phi_{k}\left(x\right),\label{eq:W2_def}
\end{equation}
\begin{equation}
C_{k}^{c,\omega}:=\int_{0}^{\infty}\mathcal{E}_{\alpha}\left(\left(\mu+\frac{\nu}{\eta+\sigma_{k}}\right)^{1/\alpha}\tau\right)\cos\left(\omega\tau\right)\dd\tau,\hspace{1em}\hspace{1em}C_{k}^{s,\omega}:=\int_{0}^{\infty}\mathcal{E}_{\alpha}\left(\left(\mu+\frac{\nu}{\eta+\sigma_{k}}\right)^{1/\alpha}\tau\right)\sin\left(\omega\tau\right)\dd\tau.\label{eq:Cc_Cs_def}
\end{equation}
Moreover, we have
\begin{equation}
\left\Vert \delta_{p}\left(\cdot,t\right)\right\Vert _{L^{2}\left(-a,a\right)}=\mathcal{O}\left(\exp\left(-\left(\mu+\frac{\nu}{\eta+\sigma_{1}}\right)t\right)\right),\hspace{1em}t\gg1,\hspace{1em}\alpha=1,\label{eq:dp_conv_alph_1}
\end{equation}

\begin{equation}
\left\Vert \delta_{p}\left(\cdot,t\right)\right\Vert _{L^{2}\left(-a,a\right)}=\mathcal{O}\left(\frac{1}{t^{\alpha}}\right),\hspace{1em}t\gg1,\hspace{1em}\alpha\in\left(0,1\right)\cup\left(1,2\right).\label{eq:dp_conv}
\end{equation}
\end{prop}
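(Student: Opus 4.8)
The plan is to substitute the explicit load (\ref{eq:P_t}) into the coefficient formula (\ref{eq:d_k_def}) and to split each $d_k(t)$, together with the mean term $P(t)/(2a)$ in (\ref{eq:p_sol}), into three pieces: a constant (time-independent) part, a purely time-harmonic part at frequency $\omega$, and a remainder vanishing as $t\to+\infty$. Writing $\lambda_k:=\mu+\nu/(\eta+\sigma_k)$ for brevity, note first that $P(0)=P_0$, so $P(t)-P(0)=P_\Delta(\cos(\omega t)-1)$. The piece of the convolution in (\ref{eq:d_k_def}) proportional to the constant $-P_\Delta$ is evaluated in closed form by the identity (\ref{eq:cal_E_a_int}) (after the substitution $s=t-\tau$), producing a factor $E_\alpha(-\lambda_k t^\alpha)-1$; the piece proportional to $\cos(\omega\tau)$ is handled by writing $\cos(\omega(t-s))=\cos(\omega t)\cos(\omega s)+\sin(\omega t)\sin(\omega s)$, turning it into $\cos(\omega t)\int_0^t\mathcal{E}_\alpha(\lambda_k^{1/\alpha}s)\cos(\omega s)\,\dd s+\sin(\omega t)\int_0^t\mathcal{E}_\alpha(\lambda_k^{1/\alpha}s)\sin(\omega s)\,\dd s$. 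The running integrals converge as $t\to+\infty$ to $C_k^{c,\omega}$ and $C_k^{s,\omega}$ of (\ref{eq:Cc_Cs_def}); this is guaranteed because, by (\ref{eq:cal_E_a_small}) and (\ref{eq:cal_E_a_large}), $\mathcal{E}_\alpha(\lambda_k^{1/\alpha}\tau)$ is $\mathcal{O}(\tau^{\alpha-1})$ near the origin and $\mathcal{O}(\tau^{-\alpha-1})$ (exponentially small for $\alpha=1$) at infinity, both integrable against the bounded trigonometric factors.

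Next I would identify the stationary state by collecting the non-decaying pieces. Since $\lambda_k>0$, the asymptotics (\ref{eq:E_a_large}) give $E_\alpha(-\lambda_k t^\alpha)\to0$; hence the constant survivor of the first bracket in (\ref{eq:d_k_def}) is $d_k^0\,\mu(\eta+\sigma_k)/[\mu(\eta+\sigma_k)+\nu]$, while the constant part of the direct term $-l_k(P(t)-P(0))/[2a(\eta+\sigma_k)]$ combines with the constant produced by the closed-form convolution to give $P_\Delta\mu l_k/\{2a[\mu(\eta+\sigma_k)+\nu]\}$; adding the mean contribution $(P_0-P_\Delta)/(2a)$ reproduces $\widetilde p_\infty$ of (\ref{eq:p_inf_def}) exactly. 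The time-harmonic survivors are the $\cos(\omega t)$ and $\sin(\omega t)$ terms: grouping the coefficient of $\cos(\omega t)$ (from $P(t)/(2a)$, from the direct term, and from the $C_k^{c,\omega}$ contribution) and the coefficient of $\sin(\omega t)$ (from the $C_k^{s,\omega}$ contribution) yields precisely $-W_1(x)\cos(\omega t)-W_2(x)\sin(\omega t)$ with $W_1,W_2$ as in (\ref{eq:W1_def})--(\ref{eq:W2_def}). Combining the two harmonics into a single cosine of amplitude $W_0$ and phase $\psi$ as in (\ref{eq:W0_psi_def}) — i.e. using $W_0\cos\psi=W_1$, $W_0\sin\psi=W_2$ — then gives the $-W_0(x)\cos(\omega t-\psi(x))$ term of (\ref{eq:p_long_t}).

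Finally I would set $\delta_p:=p-p_\infty$ and bound it in $L^2$. Its $\phi_k$-coefficient consists of an $E_\alpha(-\lambda_k t^\alpha)$-term with coefficient $c_k:=d_k^0\,\nu/[\mu(\eta+\sigma_k)+\nu]+\nu l_k P_\Delta/[2a(\eta+\sigma_k)^2\lambda_k]$, plus the oscillatory tails $\int_t^\infty\mathcal{E}_\alpha(\lambda_k^{1/\alpha}s)\cos(\omega s)\,\dd s$ and its $\sin$ analogue. As $\{\phi_k\}$ is orthonormal, Parseval gives $\|\delta_p(\cdot,t)\|_{L^2}^2=\sum_k|\delta_p^{(k)}(t)|^2$. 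The decisive structural fact is that compactness of $\mathcal{K}_2$ forces $\sigma_k\to0$, so $\sigma_k\in(0,\sigma_1]$ and $\lambda_k\in[\lambda_1,\mu+\nu/\eta)$, a compact subinterval of $(0,\infty)$; hence (\ref{eq:E_a_large}) holds uniformly in $k$, giving $|E_\alpha(-\lambda_k t^\alpha)|\le C t^{-\alpha}$ for $\alpha\ne1$ and $E_1(-\lambda_k t)=e^{-\lambda_k t}\le e^{-\lambda_1 t}$ for $\alpha=1$, while the tails are $\mathcal{O}(t^{-\alpha-1})$ per mode (integration by parts using (\ref{eq:cal_E_a_large})) and hence subdominant. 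Combined with square-summability $\sum_k|c_k|^2<\infty$ — which follows from $\sum_k|d_k^0|^2=\|p(\cdot,0)\|_{L_0^2}^2<\infty$, from $l_k=\langle g,\phi_k\rangle$ with $g(\xi)=\frac{1}{2a}\int_{-a}^aK(\zeta-\xi)\,\dd\zeta\in L^2$ (Bessel), and from boundedness of all prefactors since $\eta+\sigma_k\ge\eta$ — this yields $\sum_k|\delta_p^{(k)}(t)|^2=\mathcal{O}(t^{-2\alpha})$ resp. $\mathcal{O}(e^{-2\lambda_1 t})$, and taking square roots gives (\ref{eq:dp_conv}) and (\ref{eq:dp_conv_alph_1}); the same summability shows $\widetilde p_\infty,W_1,W_2\in L^2$, whence $p_\infty\in C(\mathbb{R}_+;L^2)$ and so $\delta_p\in C(\mathbb{R}_+;L^2)$. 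I expect the main obstacle to be exactly this uniform-in-$k$ step: the Mittag-Leffler asymptotics are pointwise in the argument, and lifting them to an $L^2$ estimate with the claimed rate requires controlling the remainders uniformly over all modes, which is what the spectrum of $\mathcal{K}_2$ staying in a compact subset of $(0,\infty)$ makes possible.
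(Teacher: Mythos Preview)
Your proposal is correct and follows essentially the same route as the paper: substitute the load into $d_k(t)$, peel off the constant piece via (\ref{eq:cal_E_a_int}), expand the oscillatory convolution with the cosine addition formula, replace $\int_0^t$ by $C_k^{c,\omega},C_k^{s,\omega}$ minus the tails $\int_t^\infty$, and then estimate $\|\delta_p\|_{L^2}$ via Parseval together with $\sum|d_k^0|^2<\infty$, $\sum|l_k|^2<\infty$ and a $k$-uniform bound on the decaying factors. The only cosmetic differences are that the paper bounds the oscillatory tail directly using $C_\alpha:=\sup_{\tau>0}\tau^{1+\alpha}|\mathcal{E}_\alpha(\tau)|<\infty$ (yielding $\mathcal{O}(t^{-\alpha})$) rather than by integration by parts, and it obtains uniformity in $k$ via the eventual monotonicity of $E_\alpha(-\cdot)$ rather than by your compactness-of-$\{\lambda_k\}$ argument.
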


\begin{proof}
Plugging (\ref{eq:P_t}) into (\ref{eq:d_k_def}) and employing (\ref{eq:cal_E_a_int}),
we obtain
\begin{align*}
d_{k}\left(t\right)= & d_{k}^{0}-\frac{P_{\Delta}l_{k}}{2a\left(\eta+\sigma_{k}\right)}\left[\cos\left(\omega t\right)-1\right]+\frac{\nu}{\mu\left(\eta+\sigma_{k}\right)+\nu}\left(d_{k}^{0}+\frac{P_{\Delta}l_{k}}{2a\left(\eta+\sigma_{k}\right)}\right)\left[E_{\alpha}\left(-\left(\mu+\frac{\nu}{\eta+\sigma_{k}}\right)t^{\alpha}\right)-1\right]\\
 & -\frac{\nu l_{k}P_{\Delta}}{2a\left(\eta+\sigma_{k}\right)^{2}}\left(\mu+\frac{\nu}{\eta+\sigma_{k}}\right)^{1/\alpha-1}\left[\cos\left(\omega t\right)\int_{0}^{t}\mathcal{E}_{\alpha}\left(\left(\mu+\frac{\nu}{\eta+\sigma_{k}}\right)^{1/\alpha}\tau\right)\cos\left(\omega\tau\right)\dd\tau\right.\\
 & \left.-\sin\left(\omega t\right)\int_{0}^{t}\mathcal{E}_{\alpha}\left(\left(\mu+\frac{\nu}{\eta+\sigma_{k}}\right)^{1/\alpha}\tau\right)\sin\left(\omega\tau\right)\dd\tau\right].
\end{align*}
Note that, due to (\ref{eq:cal_E_a_large}), the integrals here are
converging even when $t\rightarrow+\infty$. Hence, using the definitions
in (\ref{eq:Cc_Cs_def}), we can write
\[
\int_{0}^{t}\mathcal{E}_{\alpha}\left(\left(\mu+\frac{\nu}{\eta+\sigma_{k}}\right)^{1/\alpha}\tau\right)\cos\left(\omega\tau\right)\dd\tau=C_{k}^{c,\omega}-\int_{t}^{\infty}\mathcal{E}_{\alpha}\left(\left(\mu+\frac{\nu}{\eta+\sigma_{k}}\right)^{1/\alpha}\tau\right)\cos\left(\omega\tau\right)\dd\tau,
\]
\[
\int_{0}^{t}\mathcal{E}_{\alpha}\left(\left(\mu+\frac{\nu}{\eta+\sigma_{k}}\right)^{1/\alpha}\tau\right)\sin\left(\omega\tau\right)\dd\tau=C_{k}^{s,\omega}-\int_{t}^{\infty}\mathcal{E}_{\alpha}\left(\left(\mu+\frac{\nu}{\eta+\sigma_{k}}\right)^{1/\alpha}\tau\right)\sin\left(\omega\tau\right)\dd\tau.
\]
Consequently, we arrive at
\begin{align}
d_{k}\left(t\right)= & d_{k}^{0}\frac{\mu\left(\eta+\sigma_{k}\right)}{\mu\left(\eta+\sigma_{k}\right)+\nu}+\frac{l_{k}P_{\Delta}}{2a}\frac{\mu\left(\eta+\sigma_{k}\right)}{\mu\left(\eta+\sigma_{k}\right)+\nu}-\frac{l_{k}P_{\Delta}}{2a}\frac{\nu}{\left(\eta+\sigma_{k}\right)^{2}}\left(\mu+\frac{\nu}{\eta+\sigma_{k}}\right)^{1/\alpha-1}\left(C_{k}^{c,\omega}+1\right)\cos\left(\omega t\right)\label{eq:d_k_prelim}\\
 & -\frac{l_{k}P_{\Delta}}{2a}\frac{\nu}{\left(\eta+\sigma_{k}\right)^{2}}\left(\mu+\frac{\nu}{\eta+\sigma_{k}}\right)^{1/\alpha-1}C_{k}^{s,\omega}\sin\left(\omega t\right)+r_{k}\left(t\right),\nonumber 
\end{align}
where 
\begin{align}
r_{k}\left(t\right):= & \frac{\nu}{\mu\left(\eta+\sigma_{k}\right)+\nu}\left(d_{k}^{0}+\frac{P_{\Delta}l_{k}}{2a\left(\eta+\sigma_{k}\right)}\right)E_{\alpha}\left(-\left(\mu+\frac{\nu}{\eta+\sigma_{k}}\right)t^{\alpha}\right)\label{eq:r_k_def}\\
 & +\frac{l_{k}P_{\Delta}}{2a}\frac{\nu}{\left(\eta+\sigma_{k}\right)^{2}}\left(\mu+\frac{\nu}{\eta+\sigma_{k}}\right)^{1/\alpha-1}\int_{t}^{\infty}\mathcal{E}_{\alpha}\left(\left(\mu+\frac{\nu}{\eta+\sigma_{k}}\right)^{1/\alpha}\tau\right)\cos\left(\omega\left(t-\tau\right)\right)\dd\tau.\nonumber 
\end{align}
Substitution of (\ref{eq:d_k_prelim}) into (\ref{eq:p_sol}) and
taking into account (\ref{eq:p_inf_def})--(\ref{eq:W2_def}) yields
(\ref{eq:p_long_t}) with 
\begin{equation}
\delta_{p}\left(x,t\right):=\sum_{k=1}^{\infty}r_{k}\left(t\right)\phi_{k}\left(x\right).\label{eq:dp_def}
\end{equation}

It now remains to deduce (\ref{eq:dp_conv_alph_1})--(\ref{eq:dp_conv}).
To this effect, we first note that, due to the mutual orthogonality
of functions $\left\{ \phi_{k}\right\} _{k=1}^{\infty}$, we have
\begin{equation}
\left\Vert \delta_{p}\left(\cdot,t\right)\right\Vert _{L^{2}\left(-a,a\right)}=\left(\sum_{k=1}^{\infty}\left|r_{k}\left(t\right)\right|^{2}\right)^{1/2}.\label{eq:dp_Parseval}
\end{equation}

Using the fact that $0<\sigma_{k}\leq\sigma_{k-1}$ for $k>1$, we
can estimate, for sufficiently large $t>0$,
\begin{align}
\left(\sum_{k=1}^{\infty}\left|\frac{\nu}{\mu\left(\eta+\sigma_{k}\right)+\nu}E_{\alpha}\left(-\left(\mu+\frac{\nu}{\eta+\sigma_{k}}\right)t^{\alpha}\right)d_{k}^{0}\right|^{2}\right)^{1/2}\label{eq:dp_tmp_estim1}\\
\leq\frac{\nu}{\mu\eta+\nu}\left|E_{\alpha}\left(-\left(\mu+\frac{\nu}{\eta+\sigma_{1}}\right)t^{\alpha}\right)\right|\left\Vert p_{0}\left(\cdot,t\right)\right\Vert _{L^{2}\left(-a,a\right)},\nonumber 
\end{align}
\begin{align}
\left(\sum_{k=1}^{\infty}\left|\frac{P_{\Delta}\nu}{2a\left(\eta+\sigma_{k}\right)\left[\mu\left(\eta+\sigma_{k}\right)+\nu\right]}E_{\alpha}\left(-\left(\mu+\frac{\nu}{\eta+\sigma_{k}}\right)t^{\alpha}\right)l_{k}\right|^{2}\right)^{1/2}\label{eq:dp_tmp_estim2}\\
\leq\frac{P_{\Delta}\nu}{2a\eta\left(\mu\eta+\nu\right)}\left|E_{\alpha}\left(-\left(\mu+\frac{\nu}{\eta+\sigma_{1}}\right)t^{\alpha}\right)\right|\left\Vert K_{1}\right\Vert _{L^{2}\left(-a,a\right)},\nonumber 
\end{align}
with 
\[
K_{1}\left(x\right):=\int_{-a}^{a}K\left(\zeta-x\right)\dd\zeta=2a\left(1+C_{K}\right)-\left(a+x\right)\log\left(a+x\right)-\left(a-x\right)\log\left(a-x\right).
\]
Here, we employed the Parseval's identity $\sum_{k=1}^{\infty}\left|d_{k}^{0}\right|^{2}=\left\Vert p\left(\cdot,0\right)\right\Vert _{L^{2}\left(-a,a\right)}^{2}$,
$\sum_{k=1}^{\infty}\left|l_{k}\right|^{2}=\left\Vert K_{1}\right\Vert _{L^{2}\left(-a,a\right)}^{2}$
and the fact that $E_{\alpha}\left(-t\right)$ is a monotonous function
for sufficiently large values of $t$ (as follows from (\ref{eq:cal_E_a_large})).

Then, thanks to the triangle inequality for the Euclidean $l^{2}$
norm, we estimate (\ref{eq:dp_Parseval}) using (\ref{eq:r_k_def})
and (\ref{eq:dp_tmp_estim1})--(\ref{eq:dp_tmp_estim2}) as
\begin{align}
\left\Vert \delta_{p}\left(\cdot,t\right)\right\Vert _{L^{2}\left(-a,a\right)}\leq & \frac{\nu}{\mu\eta+\nu}\left[\left\Vert p\left(\cdot,0\right)\right\Vert _{L^{2}\left(-a,a\right)}+\frac{P_{\Delta}}{2a\eta}\left\Vert K_{1}\right\Vert _{L^{2}\left(-a,a\right)}\right]\left|E_{\alpha}\left(-\left(\mu+\frac{\nu}{\eta+\sigma_{1}}\right)t^{\alpha}\right)\right|\label{eq:dp_prelim_estim}\\
 & +\frac{P_{\Delta}\nu}{2a\eta^{2}}\left\Vert K_{1}\right\Vert _{L^{2}\left(-a,a\right)}\sup_{k\geq1}\left(\mu+\frac{\nu}{\eta+\sigma_{k}}\right)^{1/\alpha-1}\left|\int_{t}^{\infty}\mathcal{E}_{\alpha}\left(\left(\mu+\frac{\nu}{\eta+\sigma_{k}}\right)^{1/\alpha}\tau\right)\cos\left(\omega\left(t-\tau\right)\right)\dd\tau\right|.\nonumber 
\end{align}

In case $\alpha=1$, the functions $E_{\alpha}$ and $\mathcal{E}_{\alpha}$
in (\ref{eq:dp_prelim_estim}) reduce to exponential functions. Employing
the simple result 
\[
\int_{t}^{\infty}e^{-\beta_{k}\tau}\cos\left(\omega\left(t-\tau\right)\right)\dd\tau=\frac{\beta_{k}e^{-\beta_{k}t}}{\beta_{k}^{2}+\omega^{2}}
\]
with $\beta_{k}:=\left(\mu+\frac{\nu}{\eta+\sigma_{k}}\right)^{1/\alpha}$,
we have
\begin{align*}
\sup_{k\geq1}\left(\mu+\frac{\nu}{\eta+\sigma_{k}}\right)^{1/\alpha-1}\left|\int_{t}^{\infty}\mathcal{E}_{\alpha}\left(\left(\mu+\frac{\nu}{\eta+\sigma_{k}}\right)^{1/\alpha}\tau\right)\cos\left(\omega\left(t-\tau\right)\right)\dd\tau\right| & \leq\sup_{k\geq1}\frac{\beta_{k}^{2-\alpha}e^{-\beta_{k}t}}{\beta_{k}^{2}+\omega^{2}}\\
 & \leq e^{-\beta_{1}t},
\end{align*}
and thus deduce (\ref{eq:dp_conv_alph_1}). 

In case $\alpha\in\left(0,1\right)\cup\left(1,2\right)$, we first
use (\ref{eq:E_a_large}) in the first line of (\ref{eq:dp_prelim_estim})
to deduce the $\mathcal{O}\left(1/t^{\alpha}\right)$ decay of the
corresponding term. Next, we estimate 
\[
\left(\mu+\frac{\nu}{\eta+\sigma_{k}}\right)^{1/\alpha-1}\left|\int_{t}^{\infty}\mathcal{E}_{\alpha}\left(\left(\mu+\frac{\nu}{\eta+\sigma_{k}}\right)^{1/\alpha}\tau\right)\cos\left(\omega\left(t-\tau\right)\right)\dd\tau\right|\leq\frac{C_{\alpha}}{\alpha}\frac{1}{t^{\alpha}},
\]
which is due to the finiteness of the constant $C_{\alpha}:=\sup_{\tau>0}\tau^{1+\alpha}\left|\mathcal{E}_{\alpha}\left(\tau\right)\right|$
entailed by the continuity of $\mathcal{E}_{\alpha}$ (away from $t=0$)
and asymptotics (\ref{eq:cal_E_a_small})--(\ref{eq:cal_E_a_large}).
Therefore, the decay result (\ref{eq:dp_conv}) follows.
\end{proof}

\section{Numerical illustrations\label{sec:numerics}}

We fix the following set of parameters $a=1$, $\nu=2$, $\eta=1$,
$\mu=1.2$, $C_{K}=\log5\simeq1.61$. We take the oscillatory load
profile (\ref{eq:P_t}) with $P_{0}=6$, $P_{\Delta}=0.5$, $\omega=1.5$.
For simplicity, we assume that the punch profile $\Delta\left(x\right)$
is such that 
\begin{equation}
p\left(x,0\right)=\frac{2P_{0}}{a^{2}\pi}\sqrt{a^{2}-x^{2}},\hspace{1em}x\in\left(-a,a\right),\label{eq:p0_sqrt}
\end{equation}
which is a reasonable initial pressure form.

All computations are performed using $60$ terms in the expansion
(\ref{eq:p_sol}).

We illustrate the results for 3 different values of the parameter
$\alpha$ ($\alpha\in\left\{ 0.8,1.0,1.8\right\} $) with the parameter
$\mu=1.2$ and again with $\mu=0$. In the latter case, the model
is purely of a fractional order (no relaxation effect). Also, recall
that when $\alpha=1$, the model reduces to that which does not involve
fractional calculus (with or without relaxation, depending on $\mu$). 

In Figure \ref{fig:stat_profile}, we plot the stationary state pressure
profile (or, more precisely, a collection of curves $p_{\infty}\left(\cdot,t\right)$
evaluated for $t\in\left[0,2\pi/\omega\right]$) and investigate the
dependence of its envelope (given by $\widetilde{p}_{\infty}\pm W_{0}$)
on the choice of model parameters $\alpha$ and $\mu$. Evidently,
for $\mu=1.2$ the impact of the parameter $\alpha$ on the stationary
state is almost undetectable, which is not the case when $\mu=0$. 

Figure \ref{fig:dp_diff_alph} shows the character and the speed of
convergence of the solution to the stationary state $p_{\infty}$
measured by the quantity $\left\Vert \delta_{p}\left(\cdot,t\right)\right\Vert _{L^{\infty}\left(-a,a\right)}:=\left\Vert p\left(\cdot,t\right)-p_{\infty}\left(\cdot,t\right)\right\Vert _{L^{\infty}\left(-a,a\right)}$.
Since the initial profile (\ref{eq:p0_sqrt}) is bounded, the pressure
values at larger times are expected to remain bounded too. This is
why we replaced the $L^{2}$ norm with the $L^{\infty}$ norm in visualising
the solution convergence. Clearly, the results are in direct correspondence
to the analytical prediction given by (\ref{eq:dp_conv_alph_1})--(\ref{eq:dp_conv}).

\begin{figure}
\begin{centering}
\includegraphics[scale=0.62]{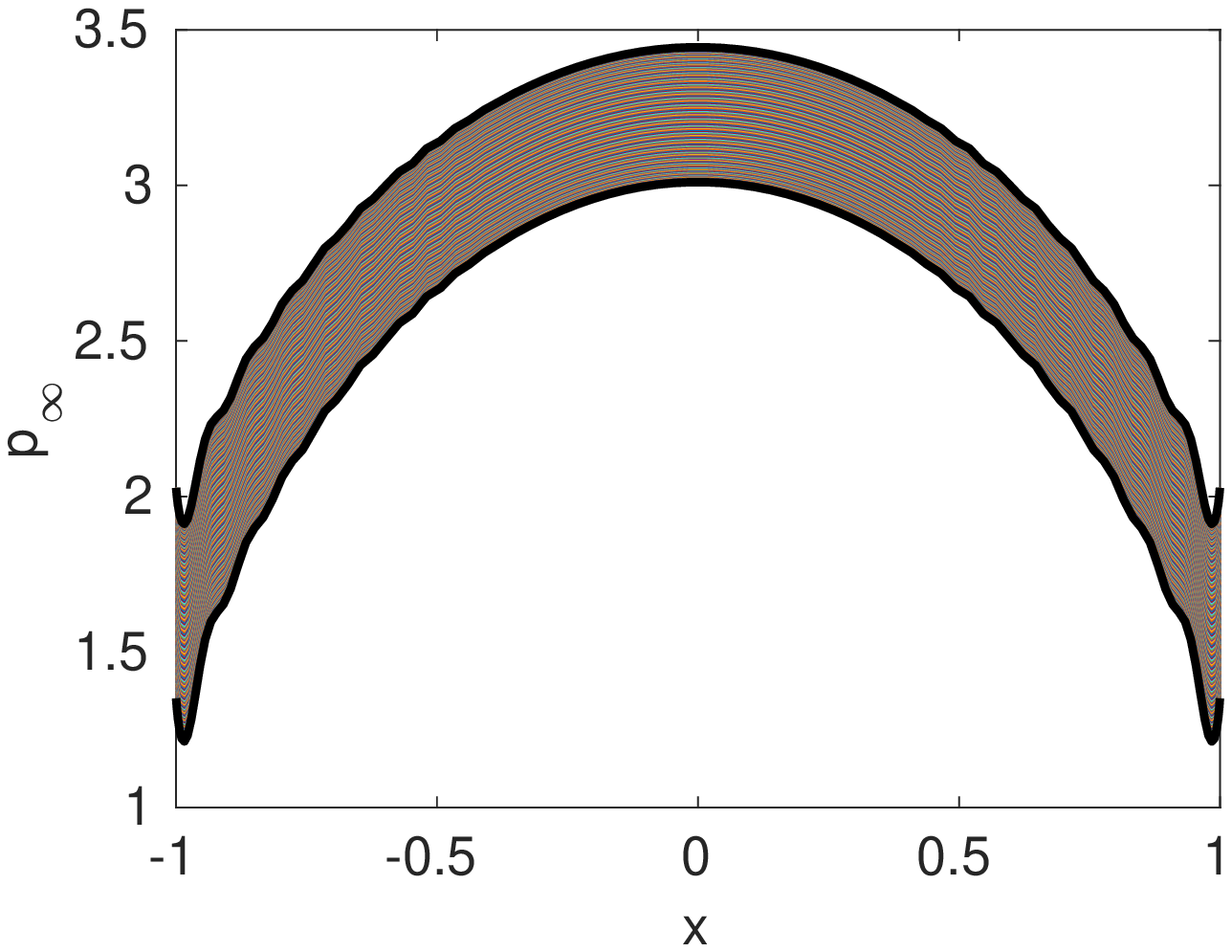}
\par\end{centering}
\centering{}\includegraphics[scale=0.62]{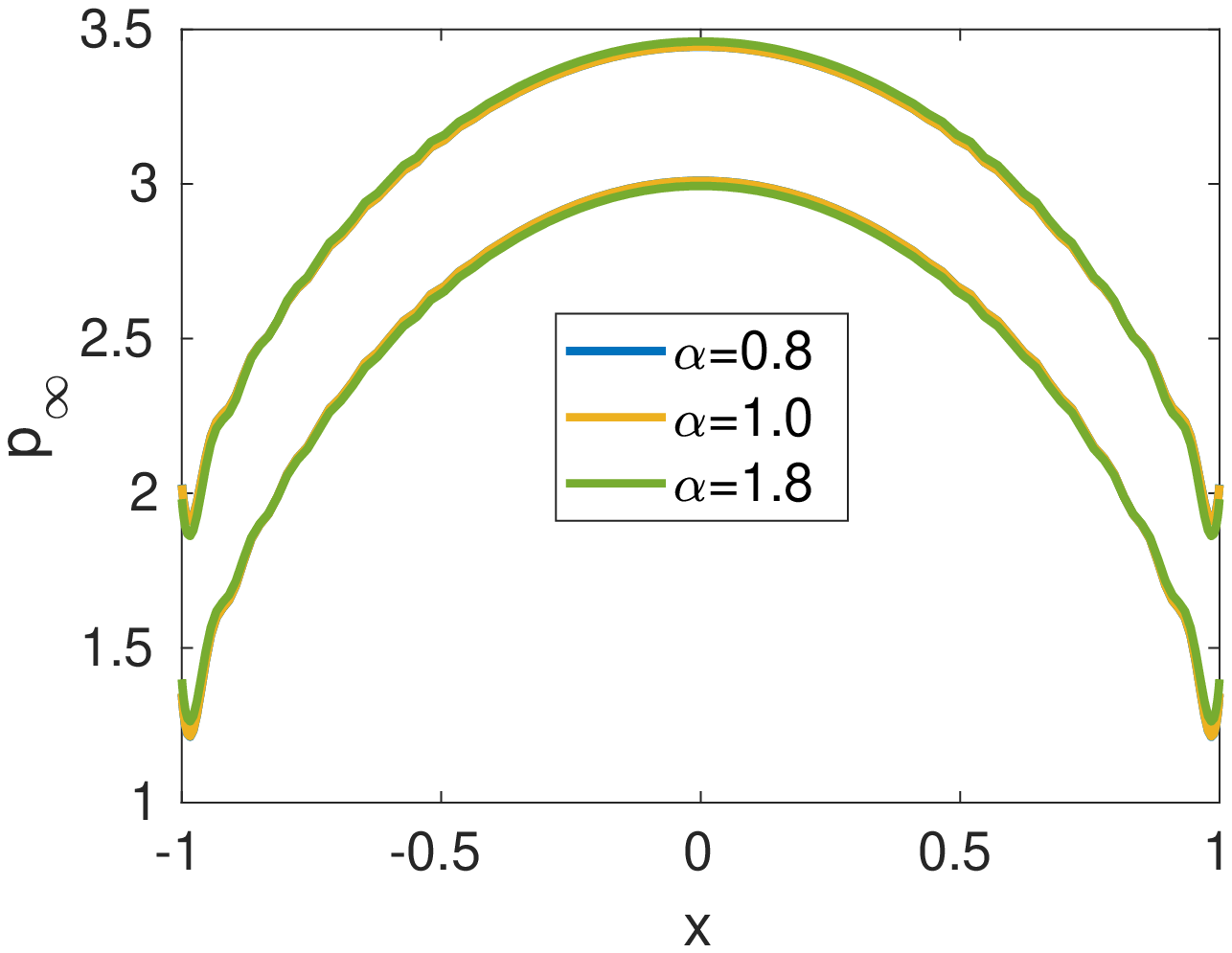}\includegraphics[scale=0.62]{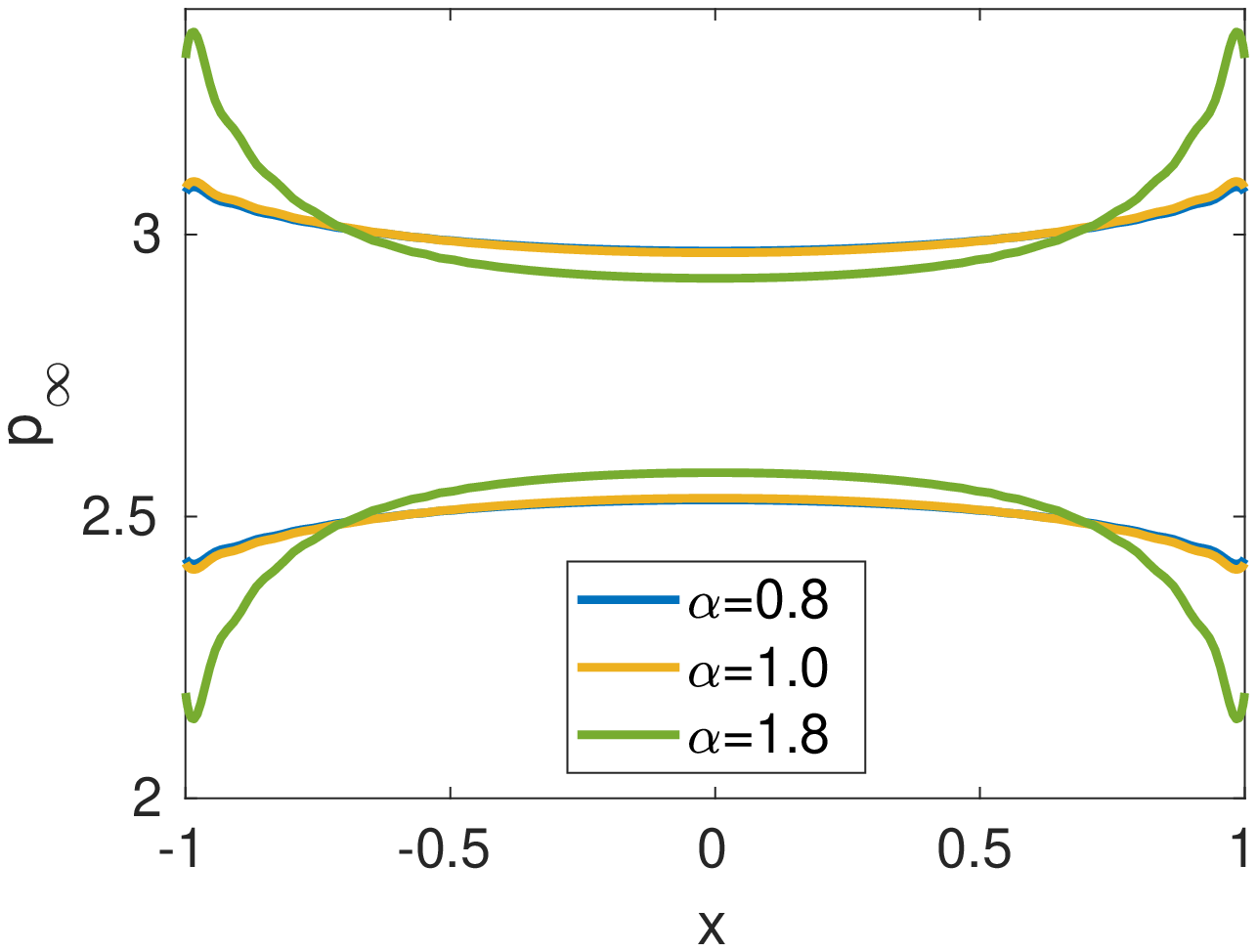}\caption{\label{fig:stat_profile} Stationary state $p_{\infty}\left(x,t\right)$
for multiple values of $t$ over the period $T=2\pi/\omega\simeq4.19$
with $\mu=1.2$ (above) and its envelope for 3 different values of
$\alpha$ (below) with $\mu=1.2$ (left) and $\mu=0$ (right)}
\end{figure}

\begin{figure}
\centering{}\includegraphics[scale=0.62]{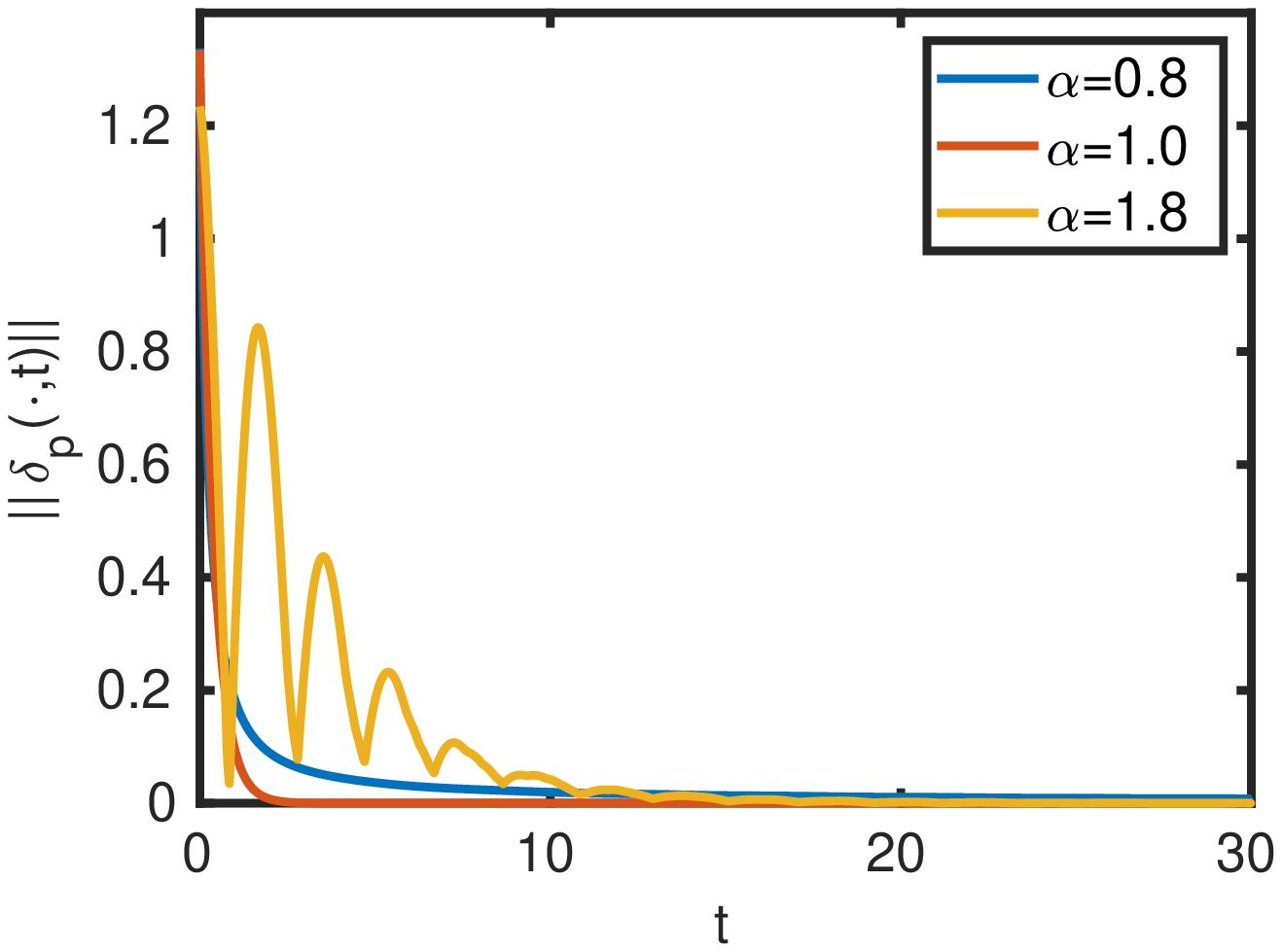}\includegraphics[scale=0.62]{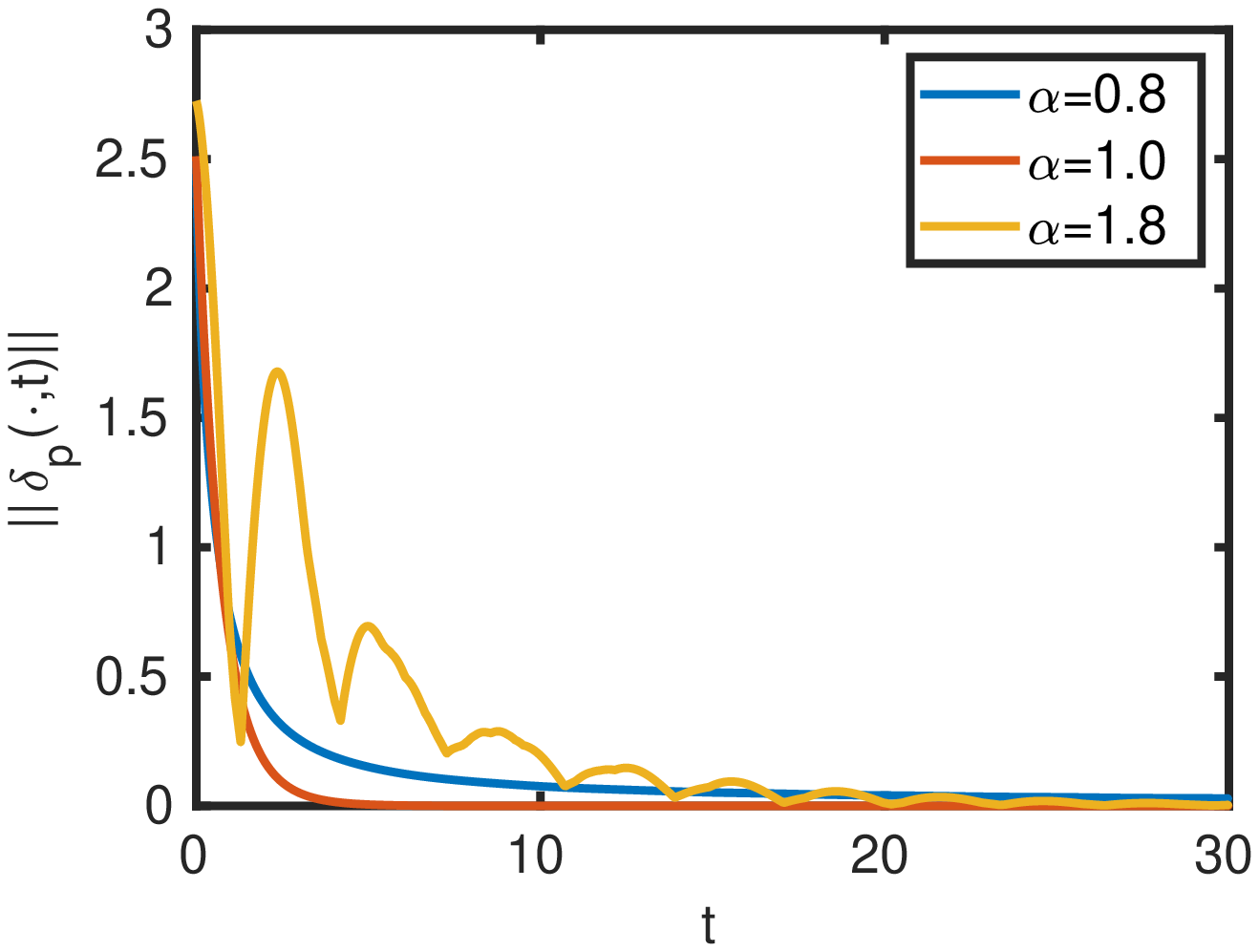}\caption{\label{fig:dp_diff_alph} Convergence of the solution to the stationary
state for 3 different values of $\alpha$ \protect \\
with $\mu=1.2$ (left) and $\mu=0$ (right)}
\end{figure}

\section{Discussion and conclusion\label{sec:conclus}}

We have revisited the classical sliding punch problem with a recently
proposed generalised model of wear. In particular, we have investigated
long-time evolution of the pressure profile under a practically important
case of exterior time-harmonic load. We have derived an explicit form
of the stationary pressure distribution in terms of eigenfunctions
of an auxiliary integral operator. Moreover, we have analysed a speed
of the convergence of the model solution to this distribution. We
note that, in contrast to previous results when the load was constant
(or eventually constant, see \cite[Sec. 5]{Pon}), here the stationary
distribution $p_{\infty}$ is a function of both space and time. Its
time dependence is, nevertheless, clear and structurally simple: it
is harmonic with the same frequency as the exterior load but with
a phase shift that depends on the spatial variable (see (\ref{eq:p_long_t})).

Numerical simulations have been performed to illustrate the obtained
results. In particular, the focus was on the dependence of the mentioned
results on the parameters $\alpha$, $\mu$ which are characteristic
for the present model. It is remarkable that the dependence of the
stationary state on the model order $\alpha$ is insignificant when
$\mu\neq0$. The parameter $\alpha$, however, has an essential impact
on the speed of the convergence towards the stationary state: the
convergence is exponential for $\alpha=1$, whereas for $\alpha\in\left(0,1\right)\cup\left(1,2\right)$,
it is algebraic but its rate grows with the increase of $\alpha$.
Moreover, when $\alpha\in\left(1,2\right)$ the convergence happens
in a non-monotone fashion. The similar effect of $\alpha$ on the
convergence rate was observed in the previous work \cite{Pon} when
the load was constant or eventually constant. We thus confirm here
the previous observation that the model parameters $\mu$ and $\alpha$
affect essentially the stationary state profile and the speed of convergence,
respectively. These statements would constitute important guidelines
when trying to fit the new model to experimental data. Such a fit
would be essential for a practical validation of the model. 

\section*{Acknowledgement}

The author is grateful for the support of AMS {\"O}sterreich during
the period of working on this manuscript.

\end{document}